\newenvironment{proof}{\mbox{{\em Proof: }}}{ \hfill $\Diamond$ \\}
\newtheorem{theorem}{Theorem}
\newtheorem{example}{Example}
\newtheorem{definition}[theorem]{Definition}
\newtheorem{lemma}[theorem]{Lemma}
\newtheorem{remark}[theorem]{Remark}
\newcommand{\E}{{\mathbb E}}
\newcommand{\R}{{\mathbb R}}
\newcommand{\N}{{\mathbb N}}
\newcommand{\Z}{{\mathbb Z}}
\begin{document}
\begin{frontmatter}

\title{Regularization and Bayesian Learning in  Dynamical Systems: Past, Present and Future} 

\thanks{This work has been partially supported by the FIRB project ``Learning meets time'' (RBFR12M3AC) funded by MIUR.}

\author[First]{A. Chiuso}

\address[First]{Dept. of Information  Engineering, University of Padova (e-mail: \{chiuso\}@dei.unipd.it)}

\begin{abstract}                
Regularization and Bayesian methods for system identification have been repopularized in the recent years, and proved to be competitive w.r.t. classical parametric approaches.
In this paper we shall make an attempt to illustrate how the use of regularization  in system identification has evolved over the years, starting from the early contributions both in the Automatic Control as well as Econometrics and Statistics literature. In particular we shall discuss some fundamental issues such as compound estimation problems and exchangeability which play and important role in regularization and Bayesian approaches, as also illustrated in early publications in Statistics.  The historical and foundational issues will be given more emphasis (and space), at the expense of the more recent developments which are only briefly discussed. The main reason for such a choice is that, while the recent literature is readily available, and surveys have already been published on the subject, in the author's opinion a  clear link with past work had not been  completely clarified.
 
\end{abstract}

\begin{keyword}
Identification, Learning, Numerical Methods, Linear Systems
\end{keyword}

\end{frontmatter}

\vspace{-2mm}
\section{Introduction} \label{intro}
\vspace{-2mm}
About sixty years have passed since the seminal paper by \cite{Zadeh1956}, which has coined the name ``identification'' and  triggered research of the Automatic Control community in the broad area of data based dynamical modeling. 

In the subsequent ten  years the field had achieved such an importance that \emph{IFAC} decided to start the series of \emph{Symposiums on System Identification} (formerly \emph{Symp. on Identification in Automatic Control Systems}) in 1967, two years after the work by \cite{AB1965} which has laid the foundations of Maximum Likelihood methods (and thus  Prediction Error Methods in the Gaussian case) for ARMAX models. I have had the honor and privilege of being a plenary speaker at the  17th Symposium of the series in Bejing, and this paper has been written as a companion to the plenary presentation, which of course could not enter into many of the details that can be found here. 

Despite this long history, the field is still lively and active. We believe there are two main reasons why this is so:
The first is definitely  the increasing importance that data centric methods are playing in many areas of Engineering and Applied Sciences with new challenges arising from the need to process  high dimensional data, possibly in real time, and with little (if any) human supervision. A very nice overview of such challenges has been discussed by Lennart Ljung in his plenary at the joint IEEE Conference on Decision and Control and European Control Conference in 2011 (\cite{LjungCDC2011}) and by Mario Sznaier in his SYSID 2012 semi plenary lecture,  \cite{Sznaier_SYSID_2012}.

The second reason, which is the topic of this paper, has to do with the revitalization of (old) techniques which are rooted in the theory of regularization and Bayesian statistics. 

This paper shall be focused on the role these latter techniques play in the recent developments  of  (linear) system identification, with the main objective to   guide the reader from the early developments to the present days, with a (brief) outlook into the future. For reasons of space we will only address linear system identification, even though we believe the methods and tools discussed here have high potential in the nonlinear scenario as well (see \cite{Tor,NLSVM_2001,PilTAC2011} and references therein). We also warn the reader that only the discrete time case will be presented. It is worth stressing that most of the recent results, just briefly discussed in Sections   \ref{History_NP} and \ref{KernelDesign}, can be framed in a continuous time scenario (see e.g. \cite{SS2010}) so that also non-uniform sampling can be handled (see e.g. \cite{NeveDNM2008} for applications with  pharmacokinetic data).

More specifically, after having introduced the problem and defined notation in Section \ref{sec:probform}, we shall provide  in Section \ref{PEM}
 an overview of parametric Maximum Likelihood/Prediction Error Methods (PEM) as formulated in \cite{AB1965}.  We warn the reader that \emph{this is not} a paper about ML/PEM and thus no attempt is made to discuss its developments over the years. The main goal of Section \ref{PEM} will be  just to set the notation and to pinpoint the weaknesses  of the parametric approach. Section \ref{History_NP} will introduce the regularization approach, with an attempt to provide a complete historical overview, including early work in Statistics and Econometrics where these type of approaches have been first advocated. In order to understand the basic ideas and motivations for the Bayesian approach, Section \ref{CompoundEstimation} introduces the related problem of compound estimation and recalls the notion of exchangeability, as a prerequisite to Section \ref{CompoundEstimationID}
 where their role in the system identification problem will be discussed. In particular the theory of compound estimation provides, from a classical (read frequentist) perspective, a sound theoretical motivation for adopting a regularization/Bayesian point of view.  
 Finally an overview of some recent research in which I have been personally involved,  regarding the design of priors and their use in structure selection problems, will be given in Sections \ref{KernelDesign} and \ref{SparseLowRank} respectively. 
 A brief  outlook into the future will be provided in Section \ref{Outlook}. 

Of course this overview reflects the author's view and other people would have certainly provided a different one. Despite the long list of references I certainly have omitted many relevant ones; yet I still hope this paper can provide a starting point for anybody interested in digging a bit deeper into the roots of Bayesian Learning in System Identification. 

\vspace{-2mm}
\section{Statement of the problem}\label{sec:probform} 
\vspace{-2mm}

Let   $u({t}) \in \R^m$, $y(t) \in \R^p$ be, respectively,  the measured   \emph{input} and \emph{output}  signals in  a dynamical system; the purpose of system identification is to find, from a finite collection of input-output data $\{u(t),y(t)\}_{t\in [1,N]}$,  a ``good'' dynamical model which describes the phenomenon under observation. The candidate model will be searched for within a so-called ``model set'' denoted by   ${\cal M}$. 
In this paper we shall use the symbol ${\cal M}_n(\theta)$ for parametric model classes where the subscript $n$ denotes the model complexity, i.e. the number of free parameters.

In this paper we shall be concerned with  identification of linear models for jointly stationary processes $\{y(t),u(t)\}_{t\in\Z}$, i.e. models described by  a convolution 
\begin{equation}\label{LDM}
y(t) = \sum_{k=1}^{\infty} g_{k} u(t-k) +  \sum_{k=0}^{\infty} h_{k} e(t-k) \quad t\in \Z.
\end{equation}
where $g$ and $h$ are the so-called impulse responses of the system and $\{e_t\}_{t\in \Z}$ is a zero mean white noise process  which under suitable assumptions is the one-step-ahead prediction error; a convenient description of the linear system \eqref{LDM} is given in terms of the transfer functions
$$
G(q): =  \sum_{k=1}^\infty g_k q^{-k}  \quad  H(q): =  \sum_{k=0}^\infty h_k q^{-k}  \quad 
$$
The linear model \eqref{LDM}  yields an ``optimal'' (in the mean square sense) output predictor which shall be denoted later on by $\hat y(t|t-1)$. As mentioned above, under suitable assumptions, the noise $e(t)$  in \eqref{LDM} is the so-called \emph{innovation} process  $ e(t)= y(t) - \hat y(t|t-1)$. 

In order to simplify the exposition, in this work we shall only deal with feedback free (i.e. assuming that there is no feedback from $y$ to $u$, see \cite{Granger}) Output Error (OE) systems;  thus $H(q)=I_p$ will be postulated. All ideas can be extended to handling, without major difficulties (see e.g. \cite{SS2011}), more general situation involving colored noise (i.e. $H(q) \neq I_p$) as well as the case where feedback is present. This however would obscure the presentation and is thus omitted.

Therefore our focus will be on linear models of the form 
\begin{equation}\label{LDM_OE}
\begin{array}{rcl}
y(t) &= &\sum_{k=1}^{\infty} g_{k} u(t-k) +  e(t) \\ 
&=& \hat y(t|t-1) + e(t)
\end{array}
\end{equation}
where (second order) joint stationarity of $\{y(t),u(t)\}_{t\in\Z}$ implies that $G(q)$ has to be BIBO stable (i.e.  analytic outside the open unit disc of the complex plane, $|q|\geq 1$).
In turn BIBO stability requires  that $g_k$ decays to zero as $k\rightarrow \infty$, and therefore  the infinite summation in \eqref{LDM_OE} can be approximated 
by a finite summation 
\begin{equation}\label{LDM_OE_FIR}
\begin{array}{rcl}
y(t) &\simeq &\sum_{k=1}^{T} g_{k} u(t-k) +  e(t)
\end{array}
\end{equation}
for a large enough integer $T$. Since this is always possible up to an arbitrarily small approximation error\footnote{Rigorously one should account for the transient effect, which can be beneficial  for small data sets where $N$ (and thus implicitly $T$) is necessarily small. This can be done rather easily estimating the free response for each output channel. Details are outside the scope of this paper and shall not be discussed here. }, in this paper we shall always work with   FIR models,   assuming exact equality is satisfied in \eqref{LDM_OE_FIR}. This transforms the infinite dimensional model  \eqref{LDM_OE} into a finite dimensional one \eqref{LDM_OE_FIR}.
All the results in this  paper could indeed be formulated with reference to the infinite dimensional model \eqref{LDM_OE}, at the price of bringing so called Reproducing Kernel Hilbert Spaces (RKHS) \cite{Aronszajn50,Saitoh88} into the picture. In our opinion this only entails additional difficulties for the reader and essentially no gain in terms of tools and results and will thus be avoided. The interested reader is referred to \cite{SS2010,SS2011,SurveyKBsysid}. We would like to remind the reader that a detailed study of the asymptotic properties of Bayes procedures for infinite dimensional models is delicate and outside the scope of this paper, see for instance \cite{knapik2011}.

In the following we shall use the notation $Y\in \R^{pN}$ to denote the (column) vector containing the stacked outputs $y(t)$, $t\in[1,N]$ and $\hat Y(g)$ the vector of stacked predictors
$\hat y(t|t-1)$, $t\in[1,N]$ which is a linear function of the impulse response coefficients $g_k$:
$$
\hat Y(g) = \Phi g
$$
where  $\Phi \in \R^{pN\times pmT}$ is a suitable matrix built with the input data $u(t)$, while the column vector $g\in \R^{pmT}$ contains the (vectorized) impulse response matrix coefficient  $g_k\in \R^{p\times m}$, $k\in[1,T]$. 

Performing identification of $g$ (i.e. estimation of the impulse response from a finite set of input output data) can be thus framed as estimation of the unknown $g$ in the linear model
\begin{equation}\label{LDM_OE_FIR_LM}
\begin{array}{rcl}
Y &= &\Phi g +  E  \quad g \in \R ^d  \quad d:=pmT
\end{array}
\end{equation}

Unfortunately the dimension $d$ of the unknown vector $g$ may be very large (and possibly much larger that the length of the available data $Y$) so that the inverse problem of determining $g$ from $Y$ in \eqref{LDM_OE_FIR_LM}, e.g. minimizing the square loss
\begin{equation}\label{gLS}
\hat g_{LS}: = \mathop{\rm arg\;min}_g \; \|Y-\Phi g\|^2,
\end{equation}
 may be (very) ill conditioned. This may be due to the fact that the input process lives in a high dimensional space ($pm$ large, so that many impulse responses need to be estimated) or simply because $g_k$ decays very slowly to zero and thus many lags need to be included in the parameter vector $g$.  

To face this problem one has to impose constraints on the structure of the vector $g$. One possibility is to parametrize  $g_k = g_k(\theta)$ using a vector $\theta \in \R^n$, $n<<pN$. In the remaining part of the paper we shall sometimes make explicit the dimension of the parameter vector using the notation $\theta_n$.
For instance one may assume  $G(q,\theta):={\cal Z}[g_k(\theta)](q)$ is a rational function of a given McMillan degree  or  $G(q,\theta)$ is expressed via an  orthonormal basis expansion (\cite{WahlbergTAC1991}). When parametric models are considered the notation $\hat y_\theta(t|t-1)$ will be used for the predictor. 
We shall come back to these representations later in the paper. 

Note however that even  ``parametric'' representations may require a large number  (possibly infinite) of parameters, such as orthogonal basis expansions.  For instance if  the Fourier  basis $q^{-k}$, $k=1,..,m$ is chosen, a finite expansion has the form $G(q,\theta) = \sum_{k=1}^m  \theta_k q^{-k} =  \sum_{k=1}^m  g_k q^{-k} $. Apparently  the ``parameter'' $\theta=g \in \R^{mpT}$ needs to grow in size in order to approximate any transfer matrix $G$.  We shall refer to this latter cases as ``non-parametric'' (even though, strictly speaking, $g=\theta$ is still a parameter) since  the length of the parameter vector $g=\theta$ may need to be arbitrarily large irrespectively on the number of data $N$ to guarantee that the model class $G(q,\theta)$ accurately describes the ``true'' underlying system.

Whether $g$ is described in a parametric fashion or in a non parametric one, the model ``complexity''   (the complexity  of the model class ${\cal M}$  where  the estimator $\hat g$ is searched for) needs to be controlled so that the estimated model behaves ``well'' on future data (e.g. in terms of output simulation). Therefore, given $Y_{test}$ new data \emph{not} used for identification, it is to be chosen so that 
the so- called \emph{Mean Squared Error}
\begin{equation}\label{MSE}
\begin{array}{rcl}
MSE_Y(\hat g) &= &\mathop {\rm lim}_{N\rightarrow \infty} \E \frac{1}{N} \|Y_{test} - \hat Y(\hat g)\|^2 
\end{array}
\end{equation}
is small. In the particular case of Output Error models with white (unit variance) noise input, the Mean Squared Error \eqref{MSE} reduces to
\begin{equation}\label{MSE}
MSE_Y(\hat g)  = \E\|g-\hat g\|^2 + \sigma^2.
\end{equation}
This is nothing but the usual   \emph{bias-variance tradeoff}: the model has to be rich enough to capture the ``true'' data generating mechanism (low bias) but also simple enough to be estimated using the available data with small variability (low variance). The loss in \eqref{MSE} is called a \emph{compound} loss on the (possibly infinite) vector $g$. As we shall see later on this plays a fundamental role in studying the properties of regularized estimators.

\vspace{-2mm}
\section{Parametric methods: The Maximum Likelihood/PEM era}\label{PEM}
\vspace{-2mm}

The nuts and bolts of Prediction Error Methods (PEM) (or Maximum Likelihood when the innovations $e(t)=y(t)-\hat y(t|t-1)$ are Gaussian) 
were laid down in the System Identification community by the seminal paper  \cite{AB1965} considering SISO  ARMAX  models. In the simplified setup we consider in this paper this corresponds to $m=p=1$, and the model class ${\cal M}_n(\theta_n)$ being described in terms of 
 rational transfer functions $G(q,\theta_n)$ of degree $\nu$:
\begin{equation}\label{parametricOE}
\begin{array}{rcl}
G(q,\theta_n) &=& \frac{\sum_{i=0}^\nu b_i q^i}{q^\nu+\sum_{i=0}^{\nu-1}  a_i q^i}\quad  \quad \sigma^2 = Var\{e(t)\}  \\
\mbox{} & & \\
 \theta_n&:=&[a_0,...a_{\nu-1},b_0,...,b_\nu,\sigma^2]^\top \in \R^{n} \quad n=2\nu+2.
\end{array}
\end{equation}

The estimator $\hat \theta_n$ was found in \cite{AB1965} following the maximum likelihood approach, which under the assumption of Gaussian innovation is equivalent  to solving a (non linear) least squares problem. Let $p_{\theta_n}(Y)$ be the likelihood function of the data $Y$ under the $n-$dimensional parameter vector $\theta_n$, we have:
\begin{equation}\label{ML_OE}
\begin{array}{rcl}
\hat \theta_n & := & \mathop{\rm arg\;\;max}_{\theta\in \R^n} \;\;\; p_{\theta}(Y) \\
& = &  \mathop{\rm arg\;\;min}_{\theta\in \R^n} \;\;\; -2 log\left(p_{\theta}(Y) \right) 
\end{array}
\end{equation}

In \cite{AB1965} also attention was paid to numerical procedures for performing likelihood maximization, which exploit the structure of the underlying model. 
The idea of using Maximum Likelihood for estimating  time series models can  be traced back to \cite{Wald1943}, se also \cite{cramer1946mathematical,grenander1950stochastic,Whittle1952}.  The theory of Maximum Likelihood estimation allows to derive asymptotic results such as consistency and asymptotic statistical efficiency under correct specification of the model class $M_n(\theta)$.  \cite{AB1965} also discuss testing procedures for checking the choice of model order, based on (estimates of) the Hessian of the log likelihood 
$L(\theta_n): = -2 \log\left(p_{\theta_n}(Y) \right) $ and on testing whiteness of the residual sequence $e_{\hat\theta_n}(t): = y(t) - \hat y_{\hat\theta_n}(t|t-1)$. 

The  literature on Statistical System Identification based on the Maximum Likelihood/Prediction Error framework has since then been extensively developed in the Statistics, Econometrics and Control literatures and it is fair to say that it has reached by now a reasonable maturity, as testified by the highly cited textbooks \cite{Ljung:99,Soderstrom,BoxJenkins,BrockwellDavis,HannanDeistler}. It is worth here to recall that much  (but possibly not enough) work has been devoted to the problem of selecting the model class (here the ``model complexity'' $n$). The early literature has focused on testing approaches, see  \cite{Quenouille1947}, \cite{Wold1949} and \cite{AB1965}. Penalty based criteria have been first advocated in \cite{Akaike:74}. The general form is 
\begin{equation}\label{OEC}
\hat n := \mathop{\rm arg\;min}_n \;\; L(\hat\theta_n) + c(N)n 
\end{equation}
where $c(N)$ is a (non decreasing) function of the data length. Different choices lead to different criteria: $c(N)=2$ is known as AIC (\cite{Akaike:74}), $c(N) = \log(N)$ leads to BIC developed in  \cite{Schwarz78} and \cite{Rissanen:78};  other alternatives are possible, see e.g. \cite{Hurvich}. The properties of these order estimation criteria and  of the estimators $\hat\theta_{\hat n}$ obtained after model selection has been performed,  called also \emph{Post Model Selection Estimators} (PMSE), have been studied by a number of authors. We shall here mention only two very relevant contributions:  \cite{YangBiometrika2005}  discusses the  relative merits of AIC and BIC type procedures, i.e. consistency and optimal minimax rate of convergence  and their mutual exclusiveness;   \cite{Leeb,leeb2006} show also that  the asymptotic properties of PMSE are   non-uniform w.r.t the ``true'' model which generates the data, thus making the finite sample properties of the PMSE $\hat\theta_{\hat n}$ possibly very different to the predictions of asymptotic theory.  It is also worth mentioning that, for finite data size, the ``true'' order is not necessarily the ``best'' one (e.g. in terms of facing the bias-variance tradeoff). 

We conclude this section by observing that the model selection step is, in our opinion, a rather critical aspect of the identification procedure. This is in fact in line with the findings, amongst others,  of \cite{Leeb,leeb2006} and with the experimental observations in  \cite{SS2010,SS2011,ChenOL12}.

\vspace{-2mm}
\section{Nonparametric methods, regularization and  Bayesian approaches: formulation and historical overview}\label{History_NP}
\vspace{-2mm}

One of the main drawbacks of parametric approaches, as discussed above, is the need to estimate the model structure (and its complexity $n$) prior to computing a point estimate of the parameter $\theta_n$. On the other hand working in a nonparametric scenario
there is no need to perform ``parametric'' model selection; the truncation length $T$ of the impulse response or the number $L$ of coefficients in  a basis expansion are just supposed to be ``large'' so that the ``true'' model is close enough to the model class. However one needs to find alternative ways to control the model complexity such as to make the inverse problem of determining $g$ from $Y$ in \eqref{LDM_OE_FIR_LM} well-posed.

Thus  a different line of work has studied system identification in the context of ill-posed (non-parametric) linear regression models of the form \eqref{LDM_OE_FIR_LM};  
the main idea can be described either in a probabilistic framework (Bayesian formulation) or deterministic one (regularization). 

From the regularization perspective one can consider estimators of the form 
\begin{equation}\label{RegLS}
\hat g^{Reg}_{\eta} = \mathop{\rm arg\; min}_g \; \| Y-\Phi g\|^2 +  J_\eta(g) 
\end{equation}
for some ``penalty'' function $J_\eta(g)\geq 0$, possibly depending on some parameters $\eta$,  which attempts to ``discourage'' certain undesired solutions. 
Alternatively, the same estimator $\hat g_{\eta}$ can be obtained in a Bayesian framework assuming that $E:= Y-\Phi g$ is a zero mean Gaussian vector with variance $\sigma^2 I$ 
and $g$ has a prior
\begin{equation}\label{prior-penalty}
p_\eta(g) \propto e^{-\frac{1}{2}J_\eta(g)}
\end{equation}
so that the Maximum a Posteriori (MAP) estimator
\begin{equation}\label{BayesLS}
 \hat g^{MAP}_{\eta} = \mathop{\rm arg\; max}_g \; p_\eta(g|Y) = \mathop{\rm arg\; max}_g \; p(Y|g)p_\eta(g), 
\end{equation}
equals $\hat g^{Reg}_\eta$ in \eqref{RegLS}. We shall thus use the notation  $ \hat g_{\eta}:= \hat g^{MAP}_{\eta} =\hat g^{Reg}_{\eta}$ hereafter.

As we shall see an interesting case is when  $J_\eta(g)$ is a quadratic form 
\begin{equation}\label{QuadraticJ}
J_\eta(g) = g^\top K_\eta ^{-1} g \quad K_\eta=K_\eta^\top >0
\end{equation}
where  now the matrix $K_\eta$ is possibly parametrized by a vector of parameters (called hyperparameters, see later on) $\eta$. Under this assumption
the density $p_\eta(g)$ in \eqref{prior-penalty} is Gaussian with zero mean and variance $K_\eta$ and the MAP estimator  \eqref{BayesLS} equals the posterior mean 
\begin{equation}\label{CM}
\hat g_\eta:=\E_\eta [ g|Y]
\end{equation}
which is to be computed for  fixed $\eta$. The matrix $K_\eta$ is also known in the Machine Learning literature as ``kernel'', see \cite{Rasmussen,Scholkopf01b}. Under the assumption \eqref{QuadraticJ} the regularized inverse problem \eqref{RegLS} is known as   Tikhonov-regularization (see \cite{Tikhonov,Melkman1979}) in the literature of inverse problems (see also \cite{Bertero1}) and as ridge regression  in the statistics and numerical analysis literature \cite{Riley:1955:SSL,Foster1961,Hoerl1962,Hoerl:70,Marquardt1975}
\begin{equation}\label{RidgeLS}
\hat g_{\eta} = \mathop{\rm arg\; min}_g \; \| Y-\Phi g\|^2 +  g^\top K_\eta^{-1} g
\end{equation}

The model class is here described as the set of impulse response $g$ such that the term $g^\top K_\eta^{-1} g$ is ``small''. The \emph{size} of this set is related to the singular values profile of $K_\eta$, which are also related to the concepts of \emph{degrees of freedom}, see \cite{Hastie09} and  equation (3.12) in \cite{GCAuto2015}.

Note that the \emph{shape} as well as the \emph{size} of this set depends on the ``regularization (hyper)parameter'' $\eta$, which needs to be fixed in order to  to compute the estimator \eqref{RidgeLS}. This is a way to control the model complexity (the ``size'' of the model class)
which can be done trading \emph{bias} and \emph{variance} controlling the \emph{generalization error} \eqref{MSE}. However, since the Mean Squared Error cannot be computed unless the true system is known, 
estimators have to be found. This can be done by so called \emph{Cross-Validation}, i.e. estimating \eqref{MSE} using data directly (see e.g. \cite{Hastie09}), or by estimators of the MSE such as \emph{Stein Unbiased Risk Estimators} ({SURE}) \cite{SteinAS1981}; also methods based on ``deterministic/worst case'' arguments have been considered in the context of identification, see e.g. \cite{TempoSmoothingParameters1995}.
 
One nice feature of the Bayesian formulation \eqref{BayesLS} as opposed to the regularization one \eqref{RegLS} is that the former provides tools to estimate the hyperparameters $\eta$ using the so-called marginal likelihood  $p_\eta(Y)$, i.e. the likelihood of the hyper parameters $\eta$ once the unmeasurable quantities ($g$) have been integrated out. Let us denote with 
\begin{equation}\label{ML}
\hat \eta:=\mathop{\rm arg\; \;max}_{\eta  }  \;\; p_\eta(Y)
\end{equation}
the marginal likelihood estimator of $\eta$. Then plugging \eqref{ML} in \eqref{CM} we obtain the so-called empirical Bayes estimator (see \cite{robbins1956,robbins1964,good1965,Maritz:1989,Rasmussen,jmlr2014,PCAuto2015}) of $g$:
\begin{equation}\label{EB}
\hat g_{EB}:=\E_{\hat\eta} [ g|Y]
\end{equation}
which is of course a function of the chosen prior $p_{\eta}(g)$ and thus of the kernel $K_\eta$.

To the best of our knowledge Bayesian methods \eqref{BayesLS} for estimating dynamical systems have first been advocated in \cite{Leamer1972} and \cite{Shiller1973} where an FIR model (called \emph{distributed lag} in those references) of the form \eqref{LDM_OE_FIR} was considered. In fact \cite{Leamer1972} and \cite{Shiller1973} were the first to talk about (and apply) ``Bayesian'' methods for system identification, arguing that  ``rigid parametric'' structures may be inadequate to the purpose; the reader may consult instead   \cite{Tiao1964}  for an early overview on the use of Bayes priors in the context of linear regression. These ideas can be traced back, to the best of the author's knowledge, to an  early  paper  by \cite{Whittaker1922}, where smoothing techniques (actually Bayesian priors) where first advocated for the purpose of ``denoising'' some measured data, with particular reference to the field of Actuarial Science in this  early reference. 
 
  In \cite{Shiller1973} it is  assumed that $g$ is a normal random variable with zero mean and covariance matrix $K$, so that the posterior mean of $g$ (conditioned on the observations $Y$) is
\begin{equation}\label{RidgeShiller}
\begin{array}{rcl}
\hat g &=& \E[g|Y] = cov(g,Y)\left[Var(Y)\right]^{-1} Y\\
& = &  K\left(\Phi K \Phi^\top + \sigma^2 I \right)^{-1} Y  = \left(\Phi^\top \Phi + \sigma^2 K^{-1}\right)^{-1} \Phi^\top Y; 
\end{array}
\end{equation}
The latter  (see the rightmost equation) has the form of a (generalized) ridge regression estimator, obtained from \eqref{RidgeLS} with $K_\eta=\frac{1}{\sigma^2} K$.
This is  a well known technique to ``stabilize'' ill-conditioned linear inverse problems (see \cite{Riley:1955:SSL,Foster1961,Hoerl1962,Hoerl:70,Marquardt1975}).

It is clear from equation \eqref{RidgeShiller} that the prior covariance $K$ plays the role of the ``ridge'' matrix in \eqref{RidgeLS}, showing the equivalence between Bayesian estimation in the Gaussian framework and Ridge Regression.

The ridge matrix $K$ was chosen in \cite{Shiller1973} so as to guarantee some degree of smoothness of $g$. However no prior information was enforced on the fact that $g$ is expected to be an (exponentially) decaying sequence. 

An important step in this direction was then made in \cite{Akaike1979} who criticized the choice made by \cite{Shiller1973} of describing the unknown impulse response imposing a smoothness prior. Instead  \cite{Akaike1979} suggests that smoothness should be enforced on the frequency response 
$$G(e^{j\omega}):= \sum_{k=1}^{T} g_k e^{j\omega k}
$$
for instance via the 
  integral of the (squared) first derivative $\frac{dG(e^{j\omega})}{d\omega}$
$$
\frac{1}{2\pi} \int_{0}^{2\pi}\left|\frac{dG(e^{j\omega})}{d\omega}\right|^2 \, d\omega = \sum_{k=1}^T  k^2 |g_k|^2.
$$
Using this as a penalty term  leads to the regularized inverse problem: 
\begin{equation}\label{RR_Bayes_Akaike}
\hat g: = \mathop{\rm arg\; min}_g \;\; \| Y - \Phi g \|^2 + \gamma \sum_{k=1}^T  k^2 |g_k|^2,
\end{equation}
again in the form of a ridge regression estimator. As discussed above  this is completely equivalent to assuming a zero mean Gaussian prior on $g$ as in \cite{Shiller1973}, yet with a decaying diagonal covariance matrix
\begin{equation}\label{KernelAkaike}
K_\gamma:= \frac{1}{\gamma} {\rm diag}\left\{1, \;\frac{1}{4}, \; \frac{1}{9},\dots,  \frac{1}{T^2} \right\}
\end{equation}

It is also worth observing that \cite{Akaike1979} is the first to advocate the use of the Empirical Bayes  approach (probably first introduced in \cite{robbins1956,robbins1964} and \cite{good1965} under the name of \emph{Type II maximum Likelihood}) to estimate  the scaling factor $\gamma$. More specifically $\gamma$ is estimated as\footnote{Slight variations suggested by Akaike include marginalization with respect to the noise variance $\sigma^2$, subject to improper priors.}:  
\begin{equation}\label{ML_Akaike}
\hat \gamma:=\mathop{\rm arg\;max}_{\gamma} \int p(Y|g) p(g|\gamma)\, dg
\end{equation}
and then substituted in \eqref{RR_Bayes_Akaike}. Note that with Gaussian errors $E$ (so that $p(Y|g)$ is a Gaussian with  mean $\Phi g$ and variance $\sigma^2 I$) and Gaussian prior $p(g|\gamma)$ 
the integral in \eqref{ML_Akaike} is available in closed form. It is interesting to observe that Akaike  suggested to use the marginal likelihood also to estimate other system properties not captured by the covariance \eqref{KernelAkaike} such as  possible delays.

This very same approach was then followed by a series of works by Kitagawa and Gersh, see e.g. \cite{KitagawaJASA1984,KitagawaTAC1985} which are well documented in  the book \cite{kitagawa1996smoothness}. 

A related formulation, yet with a slightly different goal, is found in the so called ``stochastic embedding'' approach by \cite{Goodwin1989,Goodwin1992}, see \cite{Ljung_SE_2014} for details on a ``modern'' interpretation. Essentially the idea was to admit that the ``true'' transfer function $G(q)$ is only partially captured by the chosen model class ${\cal M}_n(\theta)$
so that 
\begin{equation}\label{model_error}
G(q)  = G(q,\theta_0) + \tilde G(q) \quad G(q,\theta_0) \in {\cal M}_n(\theta)
\end{equation}
and  $\tilde G(q)$ represents a model error. \cite{Goodwin1992} describe this model error using a ``stochastic'' approach similar to \cite{Akaike1979}, yet with an exponentially decaying covariance function. More specifically:
$$
\tilde G(q):=\sum_{k=1}^{T} \tilde g_k q^{-k}
$$
where $\tilde g:=[vec(\tilde g_1)^\top,.. vec(\tilde g_T)^\top]^\top \sim {\cal N}(0, K)$ with 
\begin{equation}\label{KernelSE}
K_{\rho,\lambda}:=\lambda \,{\rm diag}\left\{1,\; \rho,\;\rho^2, \dots, \rho^{T-1} \right\}
\end{equation}  
The model error $\tilde G(q)$ is estimated in  \cite{Goodwin1992} starting from the least squares residuals $v_{\hat\theta}(t):=y(t) - G(q,\hat \theta)u(t)$ where 
$$
\hat \theta = \mathop{\rm arg\;min}_\theta \; \sum_{k=1}^T \| y(t) - G(q,\theta)u(t)\|^2
$$
Under the assumption \eqref{model_error}, the error  $v_{\hat\theta}(t):=y(t) - G(q,\hat \theta)u(t)$  is expected to be described by the model 
$$
v(t) = \tilde G(q) u(t) + e(t) 
$$ and thus 
\cite{Goodwin1992}  propose to estimate $\lambda$ and $\rho$ from marginal likelihood maximization
\begin{equation}\label{ML_SE}
(\hat\lambda, \hat \rho):=\mathop{\rm arg\;max}_{\lambda,\rho} \int p(V_{\hat\theta}|\tilde g) p(\tilde g|\lambda,\rho)\, d\tilde g
\end{equation}
where $V_{\hat\theta}:=[v^\top_{\hat\theta}(1),...,v^\top_{\hat\theta}(N)]^\top$.
 This  leads to an empirical Bayes estimator similar to that utilized in \cite{Akaike1979}. Note however that \cite{Goodwin1992} is not focused on estimating the transfer function
 $\tilde G(q)$ but rather  its variance, which is utilized to describe statistically the model error. This in turn is used  to introduce an order estimation criteria 
 based on minimization of an estimate of the Mean Squared Error  $$
\E \| G(q) - G(q,\hat \theta_n)\|^2
 $$
while accounting for the model error model $\tilde G(q)$ (see e.g. equations (89-92) in \cite{Goodwin1992}). It is worth to observe that the stochastic embedding approach, which postulates a white noise model with exponentially decaying variances  \eqref{KernelSE}
on the impulse response sequence, implies a Lipschitz  condition on the impulse response:
\begin{equation}\label{frequency_domain_smoothness}
\begin{array}{c}
 E[\| G_{{\lambda}}(e^{j\omega_1}) - G_{{\lambda}}(e^{j\omega_2})\|^2] \leq f({\lambda}) (\omega_1-\omega_2)^2\\
 \mbox{}\\
 \displaystyle{\mathop{\rm lim}_{|{\lambda}|\rightarrow 1^-}} f({\lambda}) = +\infty
 \end{array}
\end{equation}
 similar in spirit to the ``frequency domain'' smoothness condition advocated in \cite{Akaike1979}.

A significant amount of work with Bayesian flavor is also found in the econometrics literature. For instance  in a series of works Litterman, Doan and Sims (see e.g. \cite{DoanLSER1984} for an overview) study time varying multivariate autoregressive models of the form\footnote{Exogenous inputs $u(t)$ have also be considered in this framework.}
\begin{equation}\label{MVAR_DLS}
y(t)  = \sum_{k=1}^m A_{k,t} y(t-k) + C(t)+ \epsilon(t)
\end{equation}
where $C(t)$ is a time varying deterministic sequence and $\epsilon(t)$ is a white noise sequence.

The coefficients $A_{k,t}$ and $C(t)$ are modeled as a function of a number of  hyperparameters (called $\pi_i$, see page 12 of  \cite{DoanLSER1984})) as follows:
The time evolution is modeled as
  a first order auto regression 
$$
{\rm vec}\left(A_{k,t+1}\right) = \pi_8 {\rm vec}\left(A_{k,t}\right) + (1- \pi_8) {\rm vec}\bar \mu + \mu(t)
$$
where $\bar\mu$ is a fixed vector and $\mu(t)$ is a known time varying sequence. The initial conditions of this recursion (mean and variances)  are fixed
 (see eq. (4-6) \cite{DoanLSER1984}). In particular the 
coefficients $[A_{k,0}]_{i,j}$ are supposed to satisfy
\begin{equation}\label{VarianceDLS}
\begin{array}{l}
Var\{[A_{k,0}]_{i,j}\} =\ \left\{\begin{array}{rcl}
\frac{\pi_5 \pi_1}{k e^{\pi_4 w_{ii}}} & \mbox{}& i=j\\
\frac{\pi_5 \pi_2\sigma_i^2}{k e^{\pi_4 w_{ij}}\sigma_j^2} & \mbox{}& i\neq j
\end{array} \right.\\
Cov\{A_{k,0}]_{i,j},A_{h,0}]_{m,n}\} =Var\{[A_{k,0}]_{i,j}\} \delta_{hk}\delta_{im}\delta_{jn}
\end{array}
\end{equation}
where $\delta_{hk}$ is the Kronecker delta, $w_{ij}$ are suitable user defined weights and $\sigma_i^2$ are the variances of the prediction error for the $i-$th component of $y(t)$. 
This is one form of the so called \emph{Minnesota prior}, which has been discussed quite extensively in the econometrics literature; several variations and extensions are found, see for instance  \cite{Ltkepohl:2007,Giannone2015}.

It is apparent that the resulting  prior distribution has a considerable flexibility; it is reasonable to expect that the performance of any algorithm based on such prior may heavily depend upon the (rather arbitrary) user choices; 
 \cite{DoanLSER1984} discuss extensively   on the meaning of each of these choices and give suggestion as to how these should be made; delving into these details  is  outside the scope of this lecture. 
Suffices here to notice that, assuming the coefficients are fixed over time, equations \eqref{VarianceDLS} imply a  diagonal prior covariance matrix with  elements decaying as $\frac{1}{k}$ where $k$ is the lag index while different versions of the Minnesota Prior suggest decay rates of the form $\frac{1}{k^\alpha}$; this clearly  reminds  of the prior suggested by \cite{Akaike1979}, in  which $\alpha=2$.  Note that $\alpha>1$ is required in order to guarantee that  realizations $g_k$, $k=1,..,T$ are (a.s.) in $\ell_1$ as $T\rightarrow \infty$. 

The econometrics literature has since then studied Bayesian procedures  for system identification rather intensively, mostly under the acronym \emph{Bayesian VARs}); the main driving motivation was that of handling high dimensional time series (i.e. $p$ large, called \emph{cross sectional dimension} in the econometrics literature) with possibly many explicative variables ($m$ large), see for instance \cite{Knox2001,DeMol2008,BVAR2010,Giannone2015}. While 
\cite{DoanLSER1984} and \cite{BVAR2010} propose tuning the hyperparametrs using out-of-sample and in-sample error respectively, \cite{Knox2001} and the most recent work \cite{Giannone2015} adopt an Empirical Bayes approach  using the marginal likelihood (i.e. the likelihood function of the data $Y$ given the hyperparameters once the unknown parameters have been integrated out)  for hyperparameter estimation;   \cite{Giannone2015}  claims the superiority of this approach w.r.t.  previous ``ad-hoc'' techniques \cite{DoanLSER1984,BVAR2010}.

The most recent developments in the area of system identification, which have followed the seminal contribution \cite{SS2010} and the subsequent papers   \cite{SS2011,ChenOL12,ChiusoPAuto2012} (see also the survey \cite{SurveyKBsysid}) are based on modeling the unknown impulse response $g$ as a zero mean Gaussian process with a suitable covariance matrix  $K_\eta$. Most of the work has been focused on designing the matrix $K_\eta$ (see \cite{SS2010,SS2011,ChiusoPAuto2012,ChenOL12,ChenTAC2014,ChenLOBF2015}) as well as on studying the properties of the Empirical Bayes estimator \eqref{EB}, \cite{jmlr2014,PCAuto2015}. More details will be given later on.

%
\vspace{-2mm}
\section{Compound decision and Estimation problems and Exchangeability }\label{CompoundEstimation}
\vspace{-2mm}

In this section we shall attempt to bridge the gap between the classical ``frequentist'' view in statistics, which is rooted in the idea of ``repeated'' experiments with fixed but unknown parameters, and the Bayesian view where parameters are assumed to satisfy a prior distribution and are estimated minimizing suitable risk functionals.

Our detour goes back to early work by \cite{robbins1951,robbins1956,robbins1964}, \cite{Stein1956,Stein1961}, \cite{good1965}, \cite{LindleySmith1972}, \cite{EfronMorris1973}. 
 
Two mind blowing papers in statistics have been \cite{robbins1951} and \cite{Stein1956} which studied respectively the so called compound decision and estimation problems. The main message is as follows: assume there is a parameter vector $\alpha \in \R^B$ which we need to make a decision on or estimate. Two possibilities are that either the components 
of $\alpha$ are binary variables (e.g. $\alpha_i \in \{-1,1\}$) (\cite{robbins1951,robbins1956}) or continuous variables $\alpha_i\in \R$ (\cite{Stein1956,Stein1961}). For simplicity of exposition, let us consider the estimation problem and assume measurements are available of the form 
\begin{equation}\label{RobbinsStein}
Y = \alpha + E
\end{equation} where $E$ is a Gaussian noise with zero mean and covariance $\sigma^2I$ (assume for simplicity that $\sigma^2$ is known, but similar arguments hold if $\sigma^2$ needs to be estimated from data). 

Assume  that, given a rule $\hat\alpha:= \delta(Y)$ used to estimate $\alpha$, one measures the loss incurred  in the ``compound'' manner:
\begin{equation}\label{compoundLoss}
L(\alpha,\delta): =\frac{1}{B} \sum_{i=1}^B (\alpha_i-\delta_i(Y))^2
\end{equation}
This means that we are not interested in the error $\alpha_i-\delta_i(Y)$ on each  component, but only on the aggregate error \eqref{compoundLoss} which is symmetric (i.e. permutation invariant).  Note that in the System Identification scenario one is typically interested exactly in compound  losses of this form, see \eqref{MSE}.
We now need to define the concepts of \emph{admissibility}.

\begin{definition}
An estimator $\delta$ is \emph{admissible} (w.r.t. the loss function $L(\alpha,\delta)$) if there does not exist any other estimator $\delta^*$ such that 
$$
\E L(\alpha,\delta^*) \leq \E L(\alpha,\delta)  \quad \forall \alpha
$$
with  strict inequality for some $\alpha$. If such an estimator $\delta^*$ exists, $\delta$ is inadmissible and $\delta^*$  is said to dominate $\delta$. 
\end{definition}


\cite{robbins1951,robbins1956}  has shown that, for the decision problem $\alpha_i\in\{-1,1\}$, the ``classical''  Neyman-Pearson test (\cite{lehmann2005testing})
\begin{equation}\label{NP}
\hat \alpha_i=\delta^{NP}_{i}(Y) = sign(Y_i)
\end{equation} 
is not admissible. Similarly, \cite{Stein1956} has shown that, for the estimation problem $\alpha_i\in \R$,
 the least squares estimator
\begin{equation}\label{LS}
\hat \alpha_{i} =\delta^{LS}_{i}(Y)= Y_i
\end{equation} 
is \emph{inadmissible} (for $B>2$). 
Estimation (or decision) rules which dominate $\delta^{NP}_{i}(Y) $ and $\delta^{LS}_{i}(Y)$  are provided in   \cite{robbins1956} and  \cite{Stein1961} respectively. We shall now study in more detail the case considered in \cite{Stein1956,Stein1961}.


The most famous example of such rules which dominate  Least Squares is the so-called 
James-Stein estimator (discovered by James, which was at the time a student of Stein, \cite{Stein1961}):
\begin{equation}\label{JS}
\hat \alpha = \delta^{JS}(Y) = \left(1-\frac{(B-2) \sigma^2}{\|Y\|^2}\right) Y
\end{equation}
It is clear that the estimation rule $\delta^{JS}(Y)$ has a ``compound'' flavor, in that the estimator  $\delta_i^{JS}(Y)$ of the $i-th$ component $\alpha_i$ depends on all the measurements $Y$ and not only on $Y_i$. This may be surprising given  that the noise $E$  in \eqref{RobbinsStein} has uncorrelated components. It is an easy calculation to show that 
$$
\E L(\alpha,\delta^{JS}) < \E L(\alpha,\delta^{LS}) = \sigma^2 \quad \forall \;\;\alpha \; : \; \|\alpha\| < \infty $$
provided $B>2$. \cite{EfronMorris1973}  discuss the link with the empirical Bayes approach. We derive here a simplified analysis. 
Let us postulate a  Gaussian prior on $\alpha \sim {\cal N}(0,\lambda I)$ and assume the noise variance $\sigma^2$ is known. The Bayes estimator \eqref{CM} takes the form
\begin{equation}\label{CMStein}
\hat\alpha_\lambda:=\E_\lambda [\alpha|Y] = \frac{\lambda}{\lambda + \sigma^2} Y \end{equation}
and the  hyperparameter $\lambda$ can be estimated using \eqref{ML} leading to 
\begin{equation}\label{MLStein}
\hat\lambda = max\left(\frac{1}{B}\|Y\|^2 - \sigma^2,0\right)
\end{equation}
Thus leading to the Empirical Bayes estimator
\begin{equation}\label{EBStein}
\hat\alpha_{EB}:=\hat\alpha_{\hat\lambda} = \left( 1 - \frac{B \sigma^2}{\|Y\|^2}\right)^+ Y
\end{equation}
where, given $v\in \R^B$, the positive part operator $(v)^+$ is defined as  $(v)^+_i = max(v_i,0)$. The estimator \eqref{EBStein}  has the same form of the positive-part version of \eqref{JS} (see \cite{Stein1961}) with a slightly different numerator ($B$ in lieu of $B-2$), which can shown to dominate the Least Squares estimator \eqref{LS} provided $B$ is large enough.

\begin{remark}
An alternative ``Empirical Bayes'' derivation of the James Stein estimator \eqref{JS} goes as follows:  Instead of maximizing the marginal  posterior distribution of $Y$, one can observe that,  $\|Y\|^2 \sim (\sigma^2+\lambda)\chi^2_B$, so that 
$\hat A:=\sigma^2 \frac{B-2}{\|Y\|^2}$ is an unbiased estimator of $A=\frac{\sigma^2}{\sigma^2 + \lambda}$. Then, replacing $\hat A$ for $A $ in the expression 
$\hat\alpha_\lambda:=\E_\lambda [\alpha|Y] = \frac{\lambda}{\lambda + \sigma^2} Y  = (1-A)Y$ (see \eqref{CMStein}),  
gives:
$$\hat\alpha'_{EB}=\left( 1 -\hat A\right) Y =\left( 1 - \sigma^2 \frac{B-2}{\|Y\|^2}\right) Y = \delta^{JS}(Y)
$$
\end{remark}

The overall message we learn from this example is that ridge regression/Bayesian estimation can provide valuable tools to build estimators which dominate standard (non-regularized) Least Squares provided one is interested in compound losses \eqref{compoundLoss} and not  in the quality of a single parameter  $\alpha_i$. 
In addition the Empirical Bayes paradigm provides also a  tool for selecting, in the class of Bayes estimators \eqref{CMStein}, a specific one which dominates Least Squares. 
The Empirical Bayes rule leads  to the ``compound'' estimator \eqref{EBStein} which couples all observations similarly to \cite{Stein1961}. Unfortunately the situation is rather tricky if one considers measurement models such as  
\begin{equation}\label{generalModel}
Y = S \alpha + E
\end{equation}
and more general (quadratic) loss functions \eqref{compoundLoss} such as:
\begin{equation}\label{compoundLossQ}
L(\alpha,\delta):=\left(\delta-\alpha\right)^\top Q \left(\delta-\alpha\right) 
\end{equation}

The statistics literature is rich of contributions in this direction which is certainly impossible to survey here. The reader is referred to \cite{Strawd1978,Berger1982,Casella1980,Strawd2005} and references therein.  We shall come back to this issue in Section \ref{CompoundEstimationID}
where such connection will be studied in some detail. 

Before concluding this section  there is one more important remark to make which has to do with the concept of exchangeability, see  \cite{deFinetti1931,HewSav55,LindleySmith1972}. 
\begin{definition}
We say that an infinite sequence of random variables $x_i$, $i\in \N$ is \emph{exchangeable} if $\forall h\geq 2$ the joint distribution of $x_1,..,x_h$ is invariant to permutations of the indexes set $1,..,h$.
\end{definition}
An important characterization of exchangeable random sequences is due to \cite{deFinetti1931} and extended in \cite{HewSav55} who have shown that  an infinite sequence of random variables $x_i$, $i=1,..,$ is exchangeable if  and only if there exists a random distribution $\Pi(x)$ such that the joint distribution $P(x_1,..,x_h)$ satisfies
\begin{equation}\label{exchangeablePrior}
P(x_1,..,x_h) = \int \prod_{i=1}^h \Pi(x_i)\,Q(d\pi) 
\end{equation} for a suitable measure $Q(\pi)$. This is to say that, conditionally on the distribution $\Pi$, the $x_i$'s are i.i.d. with distributions $\Pi(x)$. 

For the setup considered in \cite{robbins1951} and \cite{Stein1956}, exchangeability of $\alpha_i$ is a reasonable assumption which is also reflected in the loss function $L(\alpha,\delta)$ being insensitive to permutations of $\alpha_i$'s. The result in equation \eqref{exchangeablePrior} essentially says that, for an exchangeable sequence, it is reasonable to assume that there is a hierarchical prior model for the joint distribution. Further, results in \cite{deFinetti1931,HewSav55} suggest that the distribution $\Pi(x)$ can be estimated from a sample $x_1,...x_h$. When the $x_i$'s are not observed, but only their (noisy) measurements $y_i$ are available, then  $\Pi(x)$ is to be estimated from $y_1,...y_h$.  The reader may consult also \cite{Edelman1988} where further discussion concerning the estimation of $\Pi$  as well as theoretical results on the resulting estimator's properties are derived. 

The Empirical Bayes estimator \eqref{EBStein} can be framed in this context, provided one postulates that $\frac{d\Pi}{dx}$ is the density of a zero mean Gaussian random variable with variance $\lambda$. Then $\lambda$ can be estimated from the sample $y_1,...,y_h$ as in \eqref{MLStein}. The next Section will elaborate upon this in some detail.
 
\vspace{-2mm}
\section{Connections between Compound estimation problems and System Identification}\label{CompoundEstimationID}
\vspace{-2mm}

We shall now see how the  system identification problem fits in the framework of compound estimation discussed in the previous Section. As anticipated  at the end of Section 
\ref{CompoundEstimation}, we shall need to consider measurement models of the form \eqref{generalModel} and ``weighted'' losses of the form \eqref{compoundLossQ}.

Consider now the linear model   \eqref{LDM_OE_FIR_LM} and further assume that 
the unknown vector $g$  is expanded w.r.t. a (possibly  orthogonal) basis $\psi_i$, such that $\|\psi_i\|_{\ell^1}  = \mu_i$, $i=1,..,B$: 
\begin{equation}\label{BasisExpansion}
g = \sum_{i=1}^B \psi_i\alpha_i
\end{equation}
For convenience of notation we define $\Psi:=[\psi_1,..,\psi_B    ]$ and $\alpha:=[\alpha_1,...,\alpha_B]^\top$.
With such notation, and  using \eqref{BasisExpansion},  model \eqref{LDM_OE_FIR_LM} can be rewritten in the form 
\begin{equation}\label{LDM_OE_FIR_LM_BE}
Y = \Phi \Psi \alpha + E
\end{equation}
which is of the form \eqref{generalModel} with $S:= \Phi \Psi $.

We shall come back in Section \ref{KernelDesign} on the issue of selecting the basis vectors $\psi_i$. Assume also that the sequence $\mu_i$ measures the ``importance'' of the $i-th$ basis vector $\psi_i$ and is such that, in the limit\footnote{Note that here, in order to take the limit as $B\rightarrow\infty$, we should think of $g$ as an element of $\ell(\Z^+)$.} as $B\rightarrow\infty$, 
\begin{equation}\label{summability}
\sum_{i=1}^\infty | \mu_i| <\infty
\end{equation}
so that the sum \eqref{BasisExpansion} converges (in $\ell_1$) provided $|\alpha_i|<c$, $\forall i$  and converges in mean square if the $\alpha_i$'s are i.i.d. with zero means and finite variance. 

Note that, if one postulates that $\alpha \sim{\cal N}(0,\lambda I)$, then 
\begin{equation}\label{GaussianPrior}
g = \Psi\alpha \sim{\cal N}(0,\lambda K)  \quad \quad K:= \Psi\Psi^\top
\end{equation} where now the columns of $\Psi$ are the (unnormalized) eigenvectors of $K$ and \eqref{BasisExpansion} is the (normalized) Karhunen-Lo\`eve (\cite{Loeve})  expansion of $g$. This leads back to the standard  zero mean Gaussian prior on the unknown $g$ adopted in  the Bayesian literature (\cite{Tiao1964,Leamer1972,Shiller1973,Akaike1979,DoanLSER1984,DeMol2008,Goodwin1992,BVAR2010,SS2010,SS2011,ChenOL12}). 

According to the discussion in the previous section, this assumption  implicitly implies that the vector $\alpha$ is (a sub vector of) an exchangeable sequence. In our personal view this is indeed reasonable since the basis vectors $\psi_i$ capture the properties of $g$ while the coefficients  $\alpha_i$ all have  the same weight  and  play a symmetric role.  Thus permutation  invariance of their distribution seems a reasonable assumption. We shall leave it to the reader to judge whether this is an acceptable assumption to make or not, and indeed we would welcome comments and discussions on this. 

Conversely, from \eqref{exchangeablePrior}, the exchangeability assumption on $\alpha$  only guarantees the existence of a random distribution $\Pi$, which needs not be Gaussian. Thus the hierarchical model  \eqref{GaussianPrior}  entails the additional  assumption on the parametric form of  $\Pi$ (zero mean Gaussian with variance $\lambda$) so that \eqref{exchangeablePrior} leads to  \eqref{GaussianPrior}.

Similarly to \cite{jmlr2014} we now refer to model \eqref{LDM_OE_FIR_LM_BE} and introduce
 the Singular Value Decomposition $QD^{1/2}V^\top =  \Phi \Psi $.  Define also  $Z:= Q^\top Y$,  $\beta:= V^\top \alpha$ and with some abuse of notation denote $Q^\top E$ still as $E$ (since $Q^\top E$ has the same distribution as $E$ when $E\sim {\cal N}(0,\sigma^2 I)$). 

The linear model \eqref{LDM_OE_FIR_LM_BE} becomes
\begin{equation}\label{LDM_OE_FIR_LM_BE_NEW}
Z =D^{1/2}\beta + E = \bar \beta + E \quad \bar \beta:=D^{1/2}\beta 
\end{equation}

A simple calculation shows that the  Bayes  estimator (say $\hat {\bar\beta}=\delta(Z)$) \eqref{CM} of $g$,  in the ``new'' coordinates $\bar\beta =D^{1/2}V^\top  \alpha$, takes the form of a shrinkage estimator; written componentwise: 
\begin{equation}\label{EM_Coordinates}
\delta_i(Z)=\hat{\bar\beta}_i =\left(1-\frac{\sigma^2/d_i}{ \lambda + \sigma^2/d_i} \right) Z_i
\end{equation}

Note that  model \eqref{LDM_OE_FIR_LM_BE_NEW} has the same form as (29) in \cite{jmlr2014} with the difference that now the prior covariance needs not be a multiple of the identity.  It is now interesting  to connect model \eqref{LDM_OE_FIR_LM_BE_NEW} with the change of coordinates suggested in \cite{Strawd1978}, which considers a linear estimation problem of the form \eqref{LDM_OE_FIR_LM}, \eqref{LDM_OE_FIR_LM_BE}. We again here refer to  \cite{Strawd1978}, which introduces a change of coordinates in the parameter $g$ (named $\theta$ in \cite{Strawd1978}), defining $\bar g = A g =A \Psi \alpha$ where $A$ is such that\footnote{Note that one should take  $B$ in \cite{Strawd1978} so that $B^{-1}:=\Phi^\top \Phi K \Phi^\top \Phi$; with this definition of $B$  and   given $A$ such that
$ A(\Phi^\top\Phi)^{-1}A^\top = I$ holds, $AKA^\top = \Lambda$  is equivalent to   $ABA^\top = \Lambda^{-1}$.}, for some diagonal matrix $\Lambda$, 
\begin{equation}\label{Astrawderman}
AKA^\top = \Lambda \quad \quad A(\Phi^\top\Phi)^{-1}A^\top = I
\end{equation}
The structure of this change of coordinates is characterized by the following lemma.
\begin{lemma}\label{LemStraw}
Given the Singular Value Decompositions $\Phi\Psi = QD^{1/2}V^\top$ and $\Phi = PS^{1/2}U^\top$, the change of coordinates $\bar \beta = A g=A \Psi \alpha$ leading to \eqref{LDM_OE_FIR_LM_BE_NEW} is such that \eqref{Astrawderman} is satisfied with $\Lambda  = D$ if
\begin{equation}\label{AS}
A =Q^\top P S^{1/2}U^\top = Q^\top \Phi = D^{1/2} V^\top \Psi^{-1}
\end{equation}
\end{lemma}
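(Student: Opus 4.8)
The plan is to read the three expressions for $A$ straight off the definition of the new coordinates, and then to verify the two identities in \eqref{Astrawderman} by short substitutions using the orthogonality relations supplied by the two singular value decompositions. Throughout I will use that $\Psi$ is square and nonsingular (so that $\alpha=\Psi^{-1}g$ is meaningful) and that the SVDs $\Phi\Psi=QD^{1/2}V^\top$ and $\Phi=PS^{1/2}U^\top$ are the ``thin'' ones, so that $Q^\top Q=I$, $P^\top P=I$, and $U,V$ are orthogonal.

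First I would pin down $A$. By construction the new coordinates are $\bar\beta=D^{1/2}\beta=D^{1/2}V^\top\alpha$ (see \eqref{LDM_OE_FIR_LM_BE_NEW} and the definitions preceding it), while $g=\Psi\alpha$. Requiring $\bar\beta=Ag=A\Psi\alpha$ for every $\alpha$ forces $A\Psi=D^{1/2}V^\top$, i.e.\ $A=D^{1/2}V^\top\Psi^{-1}$, which is the rightmost expression in \eqref{AS}. Multiplying $\Phi\Psi=QD^{1/2}V^\top$ on the left by $Q^\top$ and using $Q^\top Q=I$ gives $Q^\top\Phi\Psi=D^{1/2}V^\top$, hence $A\Psi=Q^\top\Phi\Psi$ and therefore $A=Q^\top\Phi$; substituting $\Phi=PS^{1/2}U^\top$ then yields $A=Q^\top PS^{1/2}U^\top$. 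This establishes the chain of equalities in \eqref{AS}.

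Next I would check $AKA^\top=D$, so that $\Lambda=D$. With $K=\Psi\Psi^\top$ and $A=D^{1/2}V^\top\Psi^{-1}$ one has $A^\top=\Psi^{-\top}VD^{1/2}$, and hence $AKA^\top=D^{1/2}V^\top\Psi^{-1}(\Psi\Psi^\top)\Psi^{-\top}VD^{1/2}=D^{1/2}(V^\top V)D^{1/2}=D$, using only the cancellations $\Psi^{-1}\Psi=I$, $\Psi^\top\Psi^{-\top}=I$ and the orthogonality $V^\top V=I$ together with $D^{1/2}D^{1/2}=D$.

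Finally I would check $A(\Phi^\top\Phi)^{-1}A^\top=I$. Using $A=Q^\top\Phi$ this equals $Q^\top\Phi(\Phi^\top\Phi)^{-1}\Phi^\top Q$. The matrix $\Phi(\Phi^\top\Phi)^{-1}\Phi^\top$ is the orthogonal projector onto $\mathrm{col}(\Phi)$; since $\Psi$ is invertible, $\mathrm{col}(\Phi)=\mathrm{col}(\Phi\Psi)=\mathrm{col}(Q)$, and because $Q$ has orthonormal columns this projector is $QQ^\top$. Therefore $A(\Phi^\top\Phi)^{-1}A^\top=Q^\top(QQ^\top)Q=(Q^\top Q)(Q^\top Q)=I$. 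Equivalently, substituting $\Phi=PS^{1/2}U^\top$ directly gives $Q^\top PP^\top Q$, and $PP^\top=QQ^\top$ since $P$ and $Q$ span the same column space. The only point that genuinely needs care is the dimension/rank bookkeeping that legitimises the thin SVDs and the inverse $(\Phi^\top\Phi)^{-1}$ — namely that $\Phi$ has full column rank and $\Psi$ is square nonsingular; granting that, the computation is immediate, which is why I expect no real obstacle here beyond stating those hypotheses.
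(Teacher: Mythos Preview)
Your argument is correct. The derivation of the three equivalent forms of $A$ from the defining relation $\bar\beta=Ag$ is clean, and both verifications of \eqref{Astrawderman} go through exactly as you describe, provided $\Psi$ is square nonsingular and $\Phi$ has full column rank (so that $(\Phi^\top\Phi)^{-1}$ exists and $\mathrm{col}(\Phi)=\mathrm{col}(\Phi\Psi)=\mathrm{col}(Q)$). You are right to flag that rank/dimension bookkeeping as the only point needing care.

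Your route differs from the paper's. The paper does not start from the coordinate change; instead it first parametrises all matrices satisfying the second condition of \eqref{Astrawderman} as $A=\bar Q\,S^{1/2}U^\top$ with $\bar Q$ orthogonal, then computes $AKA^\top=\bar Q P^\top(\Phi K\Phi^\top)P\bar Q^\top=\bar Q P^\top Q D Q^\top P\bar Q^\top$ and picks $\bar Q=Q^\top P$ to diagonalise it, arriving at $\Lambda=D$ and at the first expression in \eqref{AS}. What your approach buys is economy: by reading $A$ off the definition of $\bar\beta$ you get the full chain \eqref{AS} for free and can use whichever form of $A$ is most convenient for each check (the $D^{1/2}V^\top\Psi^{-1}$ form makes $AKA^\top=D$ a one-line cancellation, and $A=Q^\top\Phi$ turns the second identity into a projector statement). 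The paper's approach, by contrast, is more in the spirit of \emph{solving} for $A$ from the two constraints in \eqref{Astrawderman}, which makes it clearer that the choice $\bar Q=Q^\top P$ is essentially forced; but it does not explicitly establish the equalities $Q^\top\Phi=D^{1/2}V^\top\Psi^{-1}$ stated in \eqref{AS}, which your argument does.
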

\begin{proof}
It is a simple matrix manipulation to check that $A = \bar Q S^{1/2}U^\top$ for some $\bar Q$: $\bar Q^\top \bar Q=\bar Q \bar Q^\top =I$ satisfies 
$A(\Phi^\top\Phi)^{-1}A^\top = I$. Next, observe that
$$
\begin{array}{rcl}
AKA^\top& =&  \bar Q S^{1/2}U^\top K  U S^{1/2}\bar Q^\top \\
& = &  \bar Q P^\top \left[P S^{1/2}U^\top K  U S^{1/2} P^\top \right]P \bar Q^\top  \\
& = &  \bar Q P^\top  \left[ \Phi K \Phi^\top \right] P \bar Q^\top \\
& = &  \bar Q P^\top  QDQ^\top  P \bar Q^\top 
\end{array}
$$
so that choosing $\bar Q = Q^\top P$, $A =Q^\top P S^{1/2}U^\top$  satisfies  \eqref{Astrawderman}    with $\Lambda=D$, the (squared) singular values of $\Phi\Psi$. 
\end{proof}

Theorem 1 in  \cite{Strawd1978} provides, for model \eqref{LDM_OE_FIR_LM_BE_NEW},   a family of adaptive estimation rules which are \emph{minimax} w.r.t. the loss \begin{equation}\label{compoundLossD}
L(\bar\beta,\delta):=\left(\delta-\bar\beta\right)^\top D \left(\delta-\bar\beta\right)  
\end{equation}
obtained setting $Q=D$ in  \eqref{compoundLossQ}.

 We shall now attempt to provide a link between the minimax rules given in \cite{Strawd1978} and the Empirical Bayes estimator obtained plugging $\hat\lambda_{ML}$ in \eqref{EM_Coordinates}.
\begin{remark}\label{WMSE}
Note that the expected value of \eqref{compoundLossD} is exactly the weighted Mean Squared Error introduced and studied in \cite{jmlr2014} adapted to the case in which the prior covariance is not limited to being a multiple of the identity. It was shown in \cite{jmlr2014} that the marginal likelihood estimator $\hat\lambda_{ML}$ converges, as the number of data goes to infinity,  to 
\begin{equation}\label{lambda_opt}
\lambda^* : =\frac{\|\beta\|^2}{B}  = \frac{\bar\beta^\top D^{-1} \bar\beta}{B} =  \frac{g^\top  K^{-1} g}{B}
\end{equation}
where the change of coordinates in Lemma \ref{LemStraw} has been used. See also  4.2 (in particular eq. (18), (19) and Criterio 3 (page 863)) of  \cite{DeNicolao1997},  
 vs \eqref{LDM_OE_FIR_LM_BE_NEW}, \eqref{EM_Coordinates} and  \eqref{lambda_opt} of this  paper.

\end{remark}

It has been shown in \cite{jmlr2014} that the estimator $\delta^*$ obtained setting $\lambda=\lambda^*$ in \eqref{EM_Coordinates}  also minimizes (asymptotically) the expected loss $\E L(\bar\beta,\delta)$, thus
$$
\E L(\bar\beta,\delta^{*}) \leq    \E L(\bar\beta,\hat{ \bar\beta}_{LS}) 
$$
where $\hat{\bar \beta}_{LS}:=Z$ is  the least squares estimator of $\bar\beta$.  Now assume that, for $N$ large, the marginal likelihood estimator is close to its limit, i.e. 
$$
 \hat\lambda_{ML}\simeq \frac{\bar\beta^\top D^{-1} \bar\beta}{B}  \simeq \frac{Z^\top D^{-1} Z}{B} = \frac{\|Z\|^2_{D^{-1}}}{B}
$$
so  that it makes sense to approximate the empirical Bayes estimator of $\bar\beta$ replacing $\hat\lambda_{ML}$ with $\lambda =  \frac{Z^\top D^{-1} Z}{B}$, leading to 
\begin{equation}\label{EB_Strawderman}
\begin{array}{rcl}
\hat{\bar\beta}_{EB,i}& = & \left(1-\frac{\sigma^2/d_i}{ \hat\lambda_{ML} +  \sigma^2/d_i} \right) Z_i \\
& \simeq& \left(1-\frac{B \sigma^2/d_i}{ \|Z\|^2_{D^{-1}} + B  \sigma^2/d_i} \right) Z_i
\end{array}
\end{equation} 
which has a form similar to equation (2.2) in \cite{Strawd1978}. This discussion does not provide a  formal proof that the estimator $\hat{\bar\beta}_{EB}$ is minimax 
with respect to the loss \eqref{compoundLossD}, yet asymptotically it does 
dominate the least squares estimator as demonstrated in   \cite{jmlr2014}.

The discussion above  provides a clear connection between  compound estimation problems studied in \cite{Stein1961} and system identification. Indeed when performing System Identification,  there is often no interest in the single basis expansion coefficients $\alpha_i$ in \eqref{BasisExpansion} or the impulse  response coefficients $g_i$, but rather in their compound effect on output prediction error  $\tilde Y:=\Phi(g-\hat g) = \Phi A^{-1}(\bar\beta -\delta)$.  

\begin{remark}
The reader may argue that, sometimes, the interest is in the prediction w.r.t. a different input as that used for identification (e.g. low pass vs. white noise).
This fact could be accounted for with an appropriate design of the kernel matrix $K$, see eq. \eqref{weightedMSE} below. 
\end{remark}

 Expressing  the cost \eqref{compoundLossD}  in terms of output prediction error $\tilde Y$ as:
\begin{equation}\label{weightedMSE}
\begin{array}{rcl}
L(\bar\beta,\delta) & = &  L(\Phi g, \Phi \hat g) \\
& =& \tilde Y^\top \left(\Phi K \Phi^\top \right) \tilde Y = \tilde g^\top\left( \Phi^\top \Phi K \Phi^\top\Phi\right) \tilde g
\end{array}
\end{equation}
it is seen that the loss \eqref{compoundLossD} is a weighted output prediction error or, equivalently, a weighted $\ell_2$ error on the impulse response fit. The weighting $\Phi K \Phi^\top  = \Phi\Psi\Psi^\top \Phi^\top$ carries information about:
\begin{enumerate}
\item  the ``model class'' specified through the choice of the basis vectors  $\psi_i$ (and thus the covariance matrix $K = \Psi\Psi^\top$) 
\item how these vectors are mapped in output prediction through $\Phi$, which of course depends on the excitation properties since the regressor matrix $\Phi$ contains  the input $u(t)$.  
\end{enumerate}
This shows that the Bayes estimator equipped with the Marginal Likelihood estimator $\hat \lambda_{ML}$ attempts to minimize output prediction error along directions relevant to the model class considered (through the basis functions $\psi_i$) which are also  properly ``excited'' by the input. As discussed in \cite{jmlr2014}  the weighted mean squared error \eqref{weightedMSE}  gives more emphasis to directions in the ``output'' space which have high signal to noise ratio. As a result the empirical Bayes estimator    $\hat\lambda_{EB}$, which (asymptotically) minimizes the weighted MSE \eqref{weightedMSE}, is robust w.r.t. noise in the measurements $Y$. We shall elaborate upon the consequences of this observation in the next subsection. 
\subsection{Robustness of the Marginal Likelihood Estimator}
An important implication of this robustness has been discussed in Section 6 of  \cite{PCAuto2015} in connection with the concept of  \emph{excess degrees of freedom}. We recall here that the  degrees of freedom of an estimator $\delta(Y)$ are defined (under Gaussian assumptions)  in terms  of the matrix of partial derivatives, see Remark (6.3) in \cite{PCAuto2015},
\begin{equation}\label{eq:df}
\left[D_f(\delta)\right]_{ij}:=\E\left[ \frac{\partial \left(\delta(Y)\right)_i }{\partial Y_j}\right]
\end{equation}
$D_f(\delta)$ above  is related to the so called \emph{optimism} (\cite{Hastie09}), i.e. the average amount by which the ``fitting error'' underestimates the prediction error of the estimator on new data. 

For instance, with reference to  model \eqref{LDM_OE_FIR_LM}  with $\hat Y:=\Phi\hat g=\delta(Y)$, $Y$ the ``identification data'' and $Y_{test}$ some ``fresh'' test data, we have
$$
\E \|  Y_{test}- \Phi \hat g\|^2 = \E \|  Y - \Phi \hat g\|^2 + 2\sigma^2 Tr\{D_f(\delta)\}
$$
Since the estimator $\delta(Y) = \delta_{\hat\lambda(Y)}(Y)$ depends on the data also through the estimator $\hat\lambda(Y)$  of the ``regularization'' parameter $\lambda \in \R^k$, the degrees of freedom can be expressed as:
\begin{equation}\label{CovSteinDF}
\begin{array}{rcl}
D_f(\delta) &=& \E\left[\frac{\partial \delta(Y)}{\partial Y}\right]  + \sum_{i=1}^k\E\left[\frac{\partial  \delta(Y)}{\partial \lambda_i}\frac{\partial \hat\lambda_i(Y)}{\partial Y}\right]
\end{array}
\end{equation}
The trace of the  rightmost term has been called \emph{excess degrees of freedom} and accounts for  the sensitivity of  $\hat\lambda$ w.r.t. the data $Y$. 
 Connection between optimism and sensitivities of a generic estimator is also studied  in \cite{Dinuzzoetal2007}, see Thm 2, page 2478. 

 Recall also that, for $\lambda$ fixed,  the a posteriori variance
 of $\hat Y:=\Phi\hat g= \delta(Y)$ is  related to the degrees of freedom by:
\begin{equation}\label{PosteriorVariance}
\begin{array}{rcl}
{\rm Tr}\left[  Var\{\Phi (g-\hat g)\}\right]& = &  \sigma^2{\rm Tr}\left( \E\left[\frac{\partial \delta(Y)}{\partial Y}\right]  \right) 
= {\rm Tr}\left[  D_f(\delta) \right] \end{array}
\end{equation}

However the posterior variance of $\Phi \hat g$ when $\lambda$ is estimated from data is larger than  \eqref{PosteriorVariance}; in fact, also   the  uncertainty of $\hat\lambda$ should be accounted for as it (negatively) affects the posterior variance of the predictor. Thereofore it is also  desirable to control  the variability of $\hat \lambda$ or, equivalently, the excess degrees of freedom. This is, to some extent, guaranteed by the robustness of the Marginal Likelihood estimator.

%
%

\vspace{-2mm}
\section{ Priors for dynamical systems:  the present}\label{KernelDesign}
\vspace{-2mm}

In Section \ref{CompoundEstimationID} we started by postulating that the impulse response $g$ is represented using the series expansion \eqref{BasisExpansion}. This representation
  provides a  connection with the extensive literature on ridge regression and minimax estimation in linear models. It has also been observed that  \eqref{BasisExpansion}  connects with Bayesian priors   \eqref{GaussianPrior}  provided $K:= \sum_{i=1}\psi_i\psi_i^\top$. This latter decomposition can be, for instance, obtained from the Karhunen-Lo\`eve (SVD) decomposition of the kernel $K$. Thus, designing (or selecting) the basis functions $\psi_i$ is equivalent to selecting the prior covariance matrix $K$. The recent literature
  in System Identification has extensively studied this problem from the Bayesian/Regularization point of view.  The reader is also referred to \cite{GiapiAHSS2015} for connections with the so-called 
  \emph{atomic norm}. 

In most of this paper we have drawn connections between the system identification problem \eqref{LDM_OE} and estimation in the linear model \eqref{LDM_OE_FIR_LM}. There are however important features of the unknown vector $g$ in \eqref{LDM_OE_FIR_LM} which we may like to account for when performing identification. For instance, starting from \cite{Akaike1979}, (see also Section \ref{History_NP} for a thorough discussion) ``priors'' have been introduced which account for the fact that $g$ should be a decreasing sequence (for instance in \cite{Akaike1979} it is guaranteed that    ${\rm lim}_{T\rightarrow\infty}\E \sum_{k=1}^T |g_k|^2 <\infty$.). 
In this Section we shall provide an overview of  the recent developments along this direction.  The reader is also referred to \cite{Dinuzzo2012} for further discussions. 

\subsection{Priors for SISO systems}\label{SSpline}

We shall now restrict to the simplest case $p=m=1$. The MIMO case shall be discussed later on. 
In  the seminal paper \cite{SS2010} the so called \emph{Stable Spline} prior has been introduced. The basic building block is the family of  spline priors (\cite{Wahba1990}, widely used in statistics and machine learning (\cite{Rasmussen}); unfortunately these  are not suitable  to describe impulse responses of stable linear systems which are expected to be (exponentially) decaying sequences.  The main novel  idea, expanding upon a similar construction in \cite{Neve2007} (see e.g. page 1140),  was to introduce an exponential time-warping 
\begin{equation}\label{eq:decay}
\tau = e^{-\beta t}\quad \quad \beta >0, \quad t \in \R^+
\end{equation} on the family of spline priors  so as to guarantee that realizations from the prior are, with probability one, exponentially decaying functions and thus represent BIBO stable linear systems. This reminds of the prior models discussed in Section \ref{History_NP} where different forms of decay rates for the variance of the coefficients $g_i$ were postulated. It is fair to say that \cite{Goodwin1992} is probably the most closely related; in fact   it was suggested that also the decay rate $\beta$ ($\rho$ in equation \eqref{KernelSE}) could be estimated from data using the marginal likelihood. 

This is the way Stable Spline Kernels where introduced in \cite{SS2010}; From an historical perspective, it is fair to say that the Stable Spline kernels are indeed mixing ides from \cite{Shiller1973}, where \emph{time-domain} smoothness conditions where advocated for the impulse response, and \cite{Akaike1979} where  a \emph{frequency-domain} smoothness penalty was introduced. In fact, as also discussed in \cite{Goodwin1992}, the exponential decay of the impulse response variances, implies a  Lipschitz condition on the frequency response function 
\eqref{frequency_domain_smoothness}.

The simplest version of these kernels is the so-called ``first-order'' stable spline prior, known also as Tuned-Correlated (TC) kernel, see \cite{ChenOL12}. It   can be seen as the covariance sequence of the  sampled,   time-warped,  Wiener process, so that the  matrix $K$ has entries
\begin{equation}\label{TC}
\left[K_{\lambda,\beta}\right]_{i,j} = \E g_i g_j =\lambda  {\rm min}(e^{-\beta i T_c}, e^{-\beta jT_c}) 
\end{equation} where $T_c$ is the sampling time of the discrete time system.

Using the Karunnen-Loeve decomposition (random Fourier series)  of the Wiener process (see e.g. \cite{AndersonBB1997}, eq. (11)) it is a simple calculation to see that $g$ admits the series expansion
\begin{equation}\label{Wiener-Fourier}
\begin{array}{rcl}
g_k  &=& \alpha_0 e^{-\beta k T_c} + \sqrt{2}\sum_{i=1}^\infty  \alpha_i \frac{{\rm sin}(\pi i e^{-\beta k T_c})  
}{\pi i} 
\\\alpha_k &\sim  &   {\cal N}(0,\lambda) \quad i.i.d.  \end{array}
\end{equation}
which directly connects with \eqref{BasisExpansion} provided one defines
\begin{equation}\label{TC-basis}
\begin{array}{rcl}
\psi_0 &=&[1\; e^{-\beta  T_c}\; e^{-\beta 2 T_c}\; \dots e^{-\beta (T-1) T_c}]^\top\\
\psi_i & = & \sqrt{2} \left[\frac{{\rm sin}(\pi i)  
}{\pi i}  \; \frac{{\rm sin}(\pi i e^{-\beta  T_c})  
}{\pi i} \; \dots \frac{{\rm sin}(\pi i e^{-\beta (T-1) T_c})  
}{\pi i} \right]^\top
\end{array}
\end{equation}
Note that the vectors $\psi_i$ are not orthogonal (the time-warping \eqref{eq:decay} destroys the orthogonality of the Karhunen-Loeve basis)  and indeed working directly with the 
SVD of the matrix \eqref{TC} could be advantageous also from a computational point of view (\cite{CarliIFAC12,TLImpAlg2013}). Yet the expansion \eqref{Wiener-Fourier} with basis \eqref{TC-basis} shows that the impulse response $g$ is modeled by the Stable Spline prior as a decaying exponential $e^{-\beta k T_c} $ ($\psi_0$) plus a random perturbation which is a time-warped Brownian Bridge (see \cite{CLPCDC2014}). As discussed in \cite{CLPCDC2014}, also  the time-warped Brownian Bridge has a local behavior (as $k\rightarrow +\infty$) of the form 
$e^{-\beta kTc}$, which derives from the fact that $sin(x) \simeq x$ for $x\rightarrow 0$. This suggests that TC kernels, and possibly their conic combinations  (\cite{ChenTAC2014}, \cite{GiapiAHSS2015}),
are well suited to describe impulse responses of finite dimensional linear systems.

There is an interesting maximum entropy interpretation of the  stable spline kernels, which can be derived under suitable  constraints on the exponential decay of derivatives of $g$ (in the continuous time formulation), see \cite{DNMTNS, ChenMaxEnt2015} and references therein.  This very same property can be easily obtained observing that the Stable-Spline Kernels can be thought of as the covariances of time-varying backward AR processes. For instance for the TC/Ctable spline of order 1 kernel \eqref{TC}, it is an easy exercise to show that it can be obtained, on any finite interval $k\in [0,T]$, as the covariance  of the AR(1) process (assuming w.l.o.g. $T_c=1$)
\begin{equation}\label{ARInterpretation}
\begin{array}{rcll}
g_T & \simeq & {\cal N}(0,c\lambda^{T}) & g_T \perp n_i \quad \forall i\\
g_{k-1} &=& g_{k} + \sqrt{c \lambda^{k-1}(1-\lambda)} n_k &  n_k \simeq {\cal N}(0,1) \quad i.i.d. 
\end{array}
\end{equation} 

This simple observation allows to derive its Maximum Entropy properties as well as the band structure of the inverse covariance directly from well known results on AR processes (\cite{GrnSzego}).

Several variations of \eqref{TC} have been proposed, including higher order splines (\cite{SS2010}) and their filtered versions (\cite{SS2011}). \cite{ChenOL12} also introduces other possibilities (notably the so called DC kernels) while later papers  (\cite{ChenTAC2014,CLPCDC2014})  investigate the potential advantages, both from the point of computational efficiency as well as in terms of their flexibility, of building multiple kernels combining ``elementary'' kernels (TC, Stable Splines, DC etc.).

A common feature of all these kernels is that  they all depend upon at least two  hyperparameters ($\beta$ and $\lambda$) and very often more  are needed. Let us say that the hyperparameter vector is partitioned in $\lambda,\xi$.  The first is a ``scale'' parameter (the variance of the coefficients $\alpha_i$) which, as we shall see later, can  also be used to perform variable selection. The second one, $\xi$, can be seen as a ``shape'' parameter vector which  controls the shape of  the basis functions $\psi_i$. For the TC kernel discussed above, $\xi = \beta$.  The analysis reported in Section \ref{CompoundEstimationID} relies on the fact that $\xi$ is fixed, while in practice it needs to be tuned (e.g. via marginal likelihood maximization \eqref{ML} or, e.g.,  cross validation). Much work is needed to fully clarify this issue, see \cite{PCAuto2015} for some results in support of the use of the marginal likelihood.

A complementary approach has been advocated in \cite{DTV_ERNSI2014,ChenLOBF2015} where the authors started directly from an orthonormal basis (e.g. the Laguerre basis, \cite{WahlbergTAC1991}) and then placed a (decaying) prior on the coefficients (see eqs. (23) and (24) in \cite{ChenLOBF2015}) to guarantee convergence of the series (see \eqref{summability}); This leads to an overall Kernel $K_{\lambda,\xi}$ which depends on the parameters $\xi$ describing the basis functions (called $p$ in \cite{ChenLOBF2015}) and those describing the prior on the coefficients (called $\alpha$ in   \cite{ChenLOBF2015});   it is suggested that  $(\lambda,\xi)$ are estimated using marginal likelihood optimization. The results in \cite{ChenLOBF2015} suggest that this kernel may have some advantages w.r.t. the TC/Stable Spline kernels discussed above.

More recent developments  (see e.g. \cite{PPCCDC2014}) aim at building Kernels which favour estimators $\hat g$ with low McMillan degree. This is done imposing a suitable penalty on the Hankel matrix of the estimated impulse response.  A formulation in the  framework of Gaussian priors will be discussed in Section \ref{SparseLowRank}.

\subsection{Priors for MIMO systems}\label{MIMO}

When the input and/or output data have dimension larger than one (i.e. $p>1$ and/or $m>1$)  the impulse response coefficients $g_k$ are $p\times m$ real matrices. The simplest solution, advocated in \cite{ChiusoPAuto2012}, is to consider the impulse responses $\left(\left[g_1\right]_{ij},..,\left[g_T\right]_{ij}\right)$ from input $j$ to output $i$ as independent random vectors. Therefore, if   the matrix sequence $g_k$ has been vectorized component wise, i.e. 
$$
g = \left(\left[g_1\right]_{11}\dots \left[g_T\right]_{11}, \left[g_1\right]_{21}\dots \left[g_T\right]_{21}\dots \left[g_1\right]_{pm}\dots\left[g_T\right]_{pm}\right)^\top
$$  the prior covariance $K$ has a block diagonal structure 
\begin{equation}\label{Dynamic-Network-Kernel}
K_{\lambda,\beta} = {\rm block\;diag}\left\{K_{\lambda_{11},\beta_{11}},....,K_{\lambda_{pm},\beta_{pm}}\right\}
\end{equation}
where 
$$
\lambda:=[\lambda_{11},..,\lambda_{pm}] \quad \beta:=[\beta_{11},..,\beta_{pm}]
$$
Depending on the problem one may assume that $\beta_j=\beta_{i}$, $\forall i,j$ so that all kernels share the same decay rate. Instead the scale factors $\lambda_i$ are typically all different and this also serves to the purpose of performing variable selection, see Section \ref{SparseLowRank} for more details. 

Additional issues concerning ``structure'', such as low McMillan degree, presence of latent variables and so on, will be discussed in the next section.

\vspace{-2mm}
\section{Structure in Modern System Identification and Sparsity}\label{SparseLowRank}
\vspace{-2mm}

In addition to stability of the impulse response $g$, in System Identification one is often interested in favoring (and discovering) a certain type of hidden structure in the identified model, the more so as the number of  variables (and thus model ``size'') grows w.r.t. the number of available data. This structure may have to do with model complexity (measured by McMillan degree), the presence of latent (unmeasurable) factors (see e.g. \cite{ForniHLR2004}), or in terms of Granger causality structure (see \cite{Granger}) which as we shall see in Section \ref{VS} is related to sparsity;  we refer to these general properties as \emph{structure}.

In the traditional ``parametric'' framework system identification   has been performed  as the cascade of model (and hence structure) selection followed by parameter estimation. 
Regularization approaches permit to perform this structure selection directly as part of the ``estimation'' process. The unknown $g$ is embedded in a ``high dimensional'' space and, eventually, the estimator $\hat g$ is pushed towards a low dimensional manifold which is to be learned from data.  This manifold may be that of low McMillan degree models,  sparse impulse response matrices, or  models admitting latent variables representations and so on.  As it will become clear finding this underlying structure from data can be formulated as a sparsity problem.

\subsection{ Variable Selection in System Identification}\label{VS}
Consider a MIMO  OE system \eqref{LDM_OE} with transfer matrix\footnote{The same type of considerations hold for a general multivariable ARMAX model \eqref{LDM}, see \cite{ChiusoPAuto2012}.} $G(q)$;   the matrix $G(q)$ may have so zero entries (meaning that $G_{ij}(q)=0$), so that the $i-th$ output does not depend on the $j-th$ input. Detecting the  zero entries of $G(q)$ from data corresponds to  performing variable selection, or equivalently finding the structure  of a causality graph which describes the ``interaction'' between inputs and outputs; this problem has been framed in the context of sparsity in \cite{MaterassiIG2009,ChiusoPCDC2010,VandebergheJMLR2010,Vincent2011}, where the link with literature on sparse linear models (\cite{Tibshirani94,LARS2004}) and compressive sensing (\cite{Donoho2006})  has been exploited. Among the papers mentioned above, only \cite{ChiusoPCDC2010} explicitly introduces regularization  on the impulse responses to avoid overfitting when the number of lags $T$ needs to be large. Further developments can be found in \cite{ChiusoPNIPS2010,ChiusoPAuto2012} where Bayesian techniques have been proposed. In fact this ``Bayesian'' approach (which corresponds to using the kernel \eqref{Dynamic-Network-Kernel} in conjunction with marginal likelihood estimators for $\beta$ and $\lambda$) is related to so-called Automatic Relevance Determination (ARD, 
 \cite{McKayARD}) and Sparse Bayesian Learning (SBL, \cite{Tipping2001}), well known in the Machine Learning community. In fact, as shown in \cite{ChiusoPAuto2012,ChenTAC2014}, the marginal likelihood  estimators $\hat\lambda_{ij}$ of $\lambda_{ij}$ in \eqref{Dynamic-Network-Kernel} are ``threshold estimators'' meaning that $\hat\lambda_{ij}=0$ on a set of data of positive measure.  
 
 \begin{example}\label{example_simple}
 A very simple example is the following:
 assume data $y_i$, $i=1,...,N$ are generated according to $$y_i = \theta  + e_i \quad \theta \in \R$$
 where $e_i$ are i.i.d. zero mean Gaussian variables with known variance $\sigma^2$. Assuming a prior $\theta \sim {\cal N}(0,\lambda)$, it is an easy exercise to show that the maximum marginal likelihood estimator $
 \hat 
\lambda_{ML}$ \eqref{ML} is given by (see also eq. \eqref{MLStein}):
$$
\hat \lambda_{ML} = {\rm max}\left(0,\left(\frac{1}{N}\sum_{i=1}^N y_i\right)^2 - \sigma^2\right)
$$
showing that $\hat\lambda_{ML}$ is non negative only provided the squared sample average  $\left(\frac{1}{N}\sum_{i=1}^N y_i\right)^2$ of $y$ is above the noise variance  $\sigma^2$. 
\end{example}

 When $\hat\lambda_{ij}=0$ the prior for $\left(\left[g_1\right]_{ij}\dots \left[g_T\right]_{ij}\right)$, the impulse response from input $j$ to output $i$,  has zero mean and zero variance resulting in a posterior estimator $\E_{\hat\lambda}[ \left[g_k\right]_{ij}|Y]=0$, $\forall k=1,..,T$. There is evidence (see e.g. \cite{ChiusoPAuto2012} in the context of dynamical systems) that these Bayesian approaches are superior to ``Lasso-type'' procedures, and in fact an extensive literature has attempted with some success to show this, see \cite{Wipf_IEEE_TIT_2011,jmlr2014} and references therein. In particular \cite{jmlr2014} shows that these Bayesian procedures have a better ``shrinking vs. sparsity'' tradeoff than ``Lasso-type'' procedures (such as Group Lasso).

 \subsection{Low McMillan degree, Hankel matrices and sparsity}

 A major limitation of a Gaussian prior with block diagonal covariance   $K$ as in \eqref{Dynamic-Network-Kernel} 
 is that this prior does not capture the inherent relations between the transfer functions $G_{ij}(q)$ when $G(q)$ has low McMillan degree.  This is particularly important if $m$ and $p$ are large.
Several authors have addressed this issue. For instance \cite{Fazel01} 
 tackles this problem replacing $J_\eta(g)$ in  \eqref{RegLS} with a nuclear norm  penalty,
\begin{equation}\label{eq:NN}
J_\eta(g):= \eta \|{\cal H}(g)\|_* = \eta\, Trace \sqrt{{\cal H}(g){\cal H}(g)^\top}
\end{equation}
where ${\cal H}(g)$ is the block Hankel matrix built with the impulse response sequence $g$. The reason for using the nuclear norm stems from Theorem 1 in  \cite{Fazel01} which  shows that the nuclear norm $\|A\|_*$  is the convex envelope of the rank of $A$ on the ball $\|A\|<1$. Since  $J_\eta(g)$ approximates the rank function it is expected that solutions $\hat g$ of  \eqref{RegLS}  with low nuclear norm $ \|{\cal H}(\hat g)\|_*$ are close to having ${\cal H}(\hat g)$ of low rank and thus low McMillan degree of $\hat g$.  Tens of papers have been published on the use of the nuclear norm in System Identification, and it is certainly impossible to survey them here. We shall only mention  \cite{Shah2012} which discusses the link between Hankel nuclear norm and \emph{atomic} norm, \cite{HWR_SYSID_2012} which applies these ideas to Box-Jenkins modeling,  \cite{Chiuso13} where Nuclear Norm-based techniques are first compared in a simulation study to other regularization methods,  \cite{LiuVSIAM2009}, \cite{AyazogluS12} and \cite{Wahlberg2015} where  computational issues were addressed, and \cite{Verhaegen2015}  which applies nuclear norm penalties  to subspace-type algorithms. See also  \cite{PPCSYSID2015} for an overview. 

Unfortunately, using the  nuclear norm penalty is known to introduce a (possibly significant) bias in the estimator \eqref{RegLS}. This can be easily explained by the fact that the nuclear norm is really an $\ell_1$-norm on the singular values of the Hankel matrix, and thus related to Lasso-type approaches. Lasso is well known to introduce significant bias on large coefficients. This is the reason why variations of the Lasso (such as the Adaptive Lasso \cite{AdaptiveLasso} or SCAD \cite{FanLiSCAD_JASA2001} and so on) have been introduced in the Statistics literature. 

 Inspired  by the analysis of the ``Shrinking vs. Sparsity'' tradeoff of Lasso and Bayesian procedures found in \cite{jmlr2014}  and by some recent papers on rank minimization (see \cite{Wipf12,Mohan12jmlr}), the paper  \cite{PPCCDC2014}  introduces a new ``Bayesian'' procedure for rank minimization in the context of system identification.
 
A ``general'' form (see \cite{Wipf12,PPCCDC2014}) of a low-rank promoting prior for a matrix $A$, which involves quadratic forms of $A$,  has the structure 
\begin{equation}\label{low_rank_prior}
p_\lambda(A)\propto e^{-\lambda Tr\left(Q AA^\top\right)}
\end{equation}
where $Q=Q^\top\geq 0$  plays the role of an hyperparameter. Slightly different structures have been proposed in \cite{Rojas2014_rank} where ``left'' and ``right''  hyperparameters have been considered.

Here, as in \cite{PPCCDC2014}, we consider a simplified version of \eqref{low_rank_prior} which can be derived under maximum entropy arguments (see \cite{PPCSYSID2015}),  where the matrix $Q$ takes on  a very simple structure
$$
Q = \lambda_1 \Pi_{U_n} + \lambda_2 \Pi_{U^\perp_n};
$$
 $\Pi_{U_n}$ and  $\Pi_{U^\perp_n}$ are, respectively, the projectors on the column spaces of $U_n$ and on   its orthogonal complement $U_n^\perp$. 
This, complemented with a ``stable spline'' prior yields to the  so called \emph{Stable-Hankel}  (SH) prior
 \begin{equation}\label{eq:SH}
 \begin{array}{rcl}
p_\eta(g)&\propto & e^{-\frac{1}{2} J_\eta(g)} \\
J_\eta(g)&:=&\lambda_s g^\top K_s^{-1} g  +  
 \lambda_1 Trace\left[\Pi_{U_n}{\cal H}(g){\cal H}(g)^\top\right] \\ & & +  \lambda_2 Trace \left[\Pi_{U^\perp_n}{\cal H}(g){\cal H}(g)^\top\right]
\end{array}
\end{equation}
where $K_s$ is a ``stable'' kernel (with decaying diagonal elements) so as to guarantee the estimators $\hat g$ are decaying functions of the lag. 
 The recent paper \cite{GiapiAHSS2015} discusses the role of the stability constraint, and the effect of not exploiting it in Hankel nuclear norm based penalties.

 The scale factors $\lambda_s$, $\lambda_1$ and $\lambda_2$ are estimated using the marginal likelihood \eqref{ML};  also in this case the marginal likelihood can be computed in closed form, being \eqref{eq:SH} a quadratic form in $g$.  Also the matrix $U_n$ along with the number of its columns $n$ can estimated using the marginal likelihood, see \cite{PPCSYSID2015} for details. Suffices here to observe that the splitting of the penalty along 
the (column) spaces of $U_n$ and its orthogonal complement $U_n^\perp$ allows to reduce the bias along the ``signal'' subspace, while being robust w.r.t. ``errors'' within ${col\;span}\{U_n\}$. This research direction is still in its infancy and much work is needed to fully clarify the potential and limitations of this approach. The interested reader is referred to \cite{PPCSYSID2015} which has  been recently  presented at the 2015 IFAC symposium on System Identification.

\subsection{Latent variable models and sparsity}

When dealing with  high  dimensional data, and thus predictors depending upon a large number of variables, there is a need for parsimonious models  which are at  the same time robust (w.r.t. noise in the data) and interpretable. As an example consider the linear static model 
$$y = M x + e
$$ where the data $y$ are expressed as a linear function $Mx$ of some  variable $x$ (measured or unmeasured).
 Two common alternatives explored in the literature to control the complexity of $M$ is to either assume it is of \emph{low rank} ($M=M_{LR}$) or it is \emph{sparse} ($M=M_S$), or combinations thereof ($M=M_{LR}+M_S$, see e.g. \cite{Parillo_SLR2011} and references therein for the use of this decomposition in a static scenario). 
 
 The low rank component $M_{LR}$ models the presence of few underlying ``common'' variables which  capture the \emph{co-movement} of the observables $y$. Relevant keywords here are Principal Component Regression and Factor Models. On the other hand a sparse model can be encoded in terms of a network of (conditional) dependences (\cite{LAURITZEN_1996}). 
 
 Such models can  also be studied in a dynamic scenario where $y$ is a multivariate stochastic process and the matrix  $M$ is a replaced by a transfer matrix $G(q)$. For instance 
 the alternative between considering Principal Component Regression and Regularized Regression when the number of ``regressors'' becomes large is studied in \cite{DeMol2008}. 

However a through study of ``Sparse + Low rank'' dynamical models seems to be lacking; the paper \cite{ZorziS2014} discusses Sparse plus Low rank AR models 
where sparsity is related to smoothing conditions in the style of \cite{VandebergheJMLR2010}.  If instead causal (\emph{a la Granger}) conditional dependences are considered, 
some preliminary work in a Bayesian framework can be found in 
  (\cite{ZorziC2015}) where the output process $y(t) \in \R^p$ is modeled as  
\begin{equation}\label{eq:SLR}
\begin{array}{rcl}
y(t) &=& F x(t) + S(q) y(t) + v(t)  \\
x(t) &=&  H(q) y(t) + w(t)
\end{array}
\end{equation}
where $x(t) \in \R^r$, $r<<p$ is a latent process (not measurable), $S(q)$ and $H(q)$ are  strictly causal  stable  transfer matrices and $S(q)$ is sparse in the sense that many of its entries $S_{ij}(q)$ are null transfer functions. 
Model \eqref{eq:SLR} can be given a graphical interpretation with the components of $y(t)$ being ``visible'' nodes, those of $x(t)$ being hidden nodes and edges encoding the (Granger causality) dependences between these time series. The Bayesian framework discussed in this paper can be used to simultaneously infer the transfer matrices $S(q)$, $H(q)$, the matrix $F$ as well as the number of latent variables $x(t)$. Details can be found in \cite{ZorziC2015}.

%

\vspace{-2mm}
\section{Conclusions and Future Outlook}\label{Outlook}
\vspace{-2mm}

As briefly outlined in Sections \ref{KernelDesign} and \ref{SparseLowRank} the recent literature has addressed the problem of designing Kernel structure which can capture (and actually favour) structural properties of dynamical systems.  In a sense the Bayesian formulation can be thought of as  \emph{lifting one level up} the classical problem of estimating the system parameters, to the problem of determining the parameters of a flexible model class (the hyperparameters describing the prior), see also \cite{LindleySmith1972}.\\

Of course there is a tradeoff between how well a prior density can be shaped to give low bias and how well the hyperparameters can be estimated from data (low variance), thus leading to the usual \emph{bias-variance tradeoff}, which can be faced again via a  \emph{parsimony} principle.  We believe much work remains to be done along this direction, both in terms of design as well as in terms of analysis of the tuning procedures.

In terms of kernel design, it should be apparent that the problems considered  in this paper are only a few examples (possibly the most straightforward ones) and  it is indeed to be expected that many others can be considered and framed with the Bayesian language. Yet the reader should keep in mind that the System Identification problem should not be regarded as a general function regression (or interpolation) problem. There are intrinsic properties of dynamical systems that should always be accounted for and used when designing prior models or, equivalently, model classes. 

As the dimension of data increases, both in terms of number of time series $p$  which need to be jointly modeled, as well as in terms of number of observations $N$ one is able or should process at a time, the computational aspect becomes more and more relevant. The more so if real time use of these modeling tools is envisaged. 
For instance in \cite{BonettiniCPSIAM2014} fast first order  methods are considered for marginal likelihood optimization. These techniques have been adapted and used in \cite{PPCSYSID2015}  showing significant improvements in terms of computational time w.r.t. off-the-shelf techniques.

Yet much work needs to be done to make the techniques discussed in this paper  scale well with $p$ and $N$; to such purpose we believe 
that optimization issues should enter at a much earlier stage (such as the design of the prior) so as to guide the developments of algorithms and methods. 
We envision that stochastic optimization and possibly early stopping ideas may turn out to be useful in this context (see e.g. \cite{RosascoTV2014}).

We also believe that  further work is needed regarding  the use,   design and implementation of MCMC methods for handling either more complex situations, e.g. where  the marginal likelihood cannot be computed in closed form \cite{PilTAC2011}, or  for computing so-called ``full-Bayes'' solutions and 
accurately describing the uncertainty in the estimators  (\cite{NeveDNM2008}, \cite{RomeresCDC2015}), not necessarily properly quantified by the empirical Bayes estimators,  \cite{PCAuto2015}.
\vspace{-2mm}
\section*{Acknowledgements}
\vspace{-2mm}
The author is grateful to a number of colleagues with whom he has shared ideas, thoughts and joint research efforts in the past years as well as for providing comments on early versions of this manuscript. Above all my colleague Gianluigi Pillonetto,
coauthor of a long list of papers, but also, in alphabetical order, Tianshi Chen, Manfred Deistler, Giuseppe De Nicolao, Michel Gevers, Lennart Ljung, Giorgio Picci, Victor Solo, Roberto Tempo, Mattia Zorzi and my present students Giulia Prando and Diego Romeres. 


%

\end{document}